\documentclass[11pt]{article}
\usepackage{amssymb}
\usepackage[all]{xy}
\usepackage{graphicx}

\title{The Logos Categorical Approach to QM: II.\\Quantum Superpositions.}

\author{{\sc C. de Ronde}$^{1,2}$ and {\sc C. Massri}$^{3,4}$}
\date{}

\usepackage[margin=2.5cm]{geometry}

\begin{document}

\bibliographystyle{plain}
\maketitle

\begin{center}
\begin{small}
1. Philosophy Institute Dr. A. Korn (UBA-CONICET)\\
2. Center Leo Apostel for Interdisciplinary Studies\\Foundations of the Exact Sciences (Vrije Universiteit Brussel)\\
3. Institute of Mathematical Investigations Luis A. Santal\'o (UBA-CONICET)\\
4. University CAECE
\end{small}
\end{center}

\begin{abstract}
\noindent In this paper we attempt to consider quantum superpositions from the perspective of the logos categorical approach presented in \cite{deRondeMassri17a}. We will argue that our approach allows us not only to better visualize the structural features of quantum superpositions providing an {\it anschaulich} content to all terms, but also to restore ---through the intensive valuation of graphs and the notion of {\it immanent power}--- an objective representation of what QM is really talking about. In particular, we will discuss how superpositions relate to some of the main features of the theory of quanta, namely, contextuality, paraconsistency, probability and measurement. 
\end{abstract}
\begin{small}

{\bf Keywords:} {\em Categorical QM, Logoi, quantum superpositions.}
\end{small}

\newtheorem{theo}{Theorem}[section]
\newtheorem{definition}[theo]{Definition}
\newtheorem{lem}[theo]{Lemma}
\newtheorem{met}[theo]{Method}
\newtheorem{prop}[theo]{Proposition}
\newtheorem{coro}[theo]{Corollary}
\newtheorem{exam}[theo]{Example}
\newtheorem{rema}[theo]{Remark}{\hspace*{4mm}}
\newtheorem{example}[theo]{Example}
\newcommand{\proof}{\noindent {\em Proof:\/}{\hspace*{4mm}}}
\newcommand{\qed}{\hfill$\Box$}
\newcommand{\ninv}{\mathord{\sim}} 
\newtheorem{postulate}[theo]{Postulate}

\bigskip

\bigskip

\bigskip

\bigskip

\bigskip

\bigskip

\bigskip

\bigskip

\bigskip

\bigskip

\bigskip

\section*{Introduction}

Quantum superpositions are certainly one of the most important elements within the new quantum technological era that is taking place today in quantum information processing. Unfortunately, the many technological and formal advances in quantum computation, teleportation and cryptography are not being accompanied by the community interested in philosophical and foundational issues about Quantum Mechanics (QM). As one of us has argued elsewhere \cite{deRonde17a}, instead of analyzing and learning about the physical meaning of quantum superpositions, the orthodox project in philosophy of QM has concentrated its efforts in trying to ``bridge the gap'' between the ``weird'' mathematical formalism and our classical ``manifest image of the world''. This project is grounded in two Bohrian (metaphysical) presuppositions ---turned today into dogma--- according to which, first, we must {\it necessarily} presuppose the existence of a reductionistic ``limit'' between QM and classical physics (i.e., the famous quantum to classical limit), and second, we must accept that the language used by physicists in order to account for observations {\it will be forever} constrained by that of Newton and Maxwell. 

Taking distance from this ``conservative project'' which grounds itself in the classical (metaphysical) worldview, our logos approach to QM ---as presented in \cite{deRondeMassri17a}--- advances exactly in the opposite direction. Namely, our proposal seeks to understand the theory completely independently of classical physics and its metaphysical representation.\footnote{As we discussed in detail in [{\it Op. cit.}], the classical representation amounts to an understanding of physical reality in terms of an {\it actual state of affairs}; more specifically, in terms of `systems' constituted by definite valued `properties'.} Furthermore, when attempting to develop such a (non-classical) representation in order to understand QM, we do not believe to be {\it necessarily} constrained by the conceptual schemes created less than four centuries ago by Newton and Maxwell. In fact, we see the history of physics, philosophy and mathematics as one of continuos development and creation. A creation that is not linear nor reductionistic but, on the very contrary, one that is multiple and plural. It is exactly this ontological pluralist viewpoint (see \cite{deRonde16b}) which allows us to avoid the reductionistic questioning imposed by what could be called ``the Bohrian orthodoxy''. 

In the present paper we attempt to discuss, continuing the analysis presented in \cite{deRonde17a}, the physical meaning of quantum superpositions. For this purpose, we attempt to derive from our logos categorical formalism ---which makes explicit use of the notions of {\it immanent power} and {\it potentia}--- a new representation that is capable of providing an {\it anschaulich} content to the theory. That is, an intuitive approach that is able to explain and understand, in both formal and conceptual manner, what the theory of quanta is really taking about. 

The paper is organized as follows. In section 1 we discuss the superposition problem which attempts to change the perspective regarding quantum superpositions commonly addressed in terms of the infamous measurement problem. In section 2 we discuss a possible solution to the superposition problem in terms of the notion of {\it immanent intensive power}. Section 3 presents an explicit definition of `quantum superposition' in the logos formalism. After this definition, we discus the visualization and understanding of some important features of superpoitions such as contextuality, paraconsistency, probability and measurement. Finally, we end the paper with some general conclusions.

\section{The Superposition Problem} 

Very unfortunately, quantum superpositions are regarded today as unwanted guests in the philosophical literature about QM. Their characterization is most often accompanied by negative adjectives like ``embarrassing'', ``weird'', ``strange'', ``ghostly'', ``spooky'', etc. In a recent interview on the subject, superpositions were even pictured as a ``contagious disease'' (See \cite{Wallace17}). The unease of philosophers of physics with quantum superpositions seems to come from the fact that their existence apparently contradicts what we actually observe in the lab. Indeed, we never observe a pointer reading in a superposed state of `1' and `4' ---as predicted by QM. This has lead many researchers to believe that ---using the words of Robert Griffiths \cite{Griffiths13}--- we should better get ``rid of the ghost of Schr\"odinger's cat.'' 

We believe that this gloomy attitude found in the portrayal of quantum superpositions is a consequence of the deep widespread ---sometimes explicit, but most of the time implicit--- commitment to the Bohrian doctrine of concepts according to which our representation of ``classical reality'' must be regarded not only as a standpoint for a proper understanding of {\it what is observed}, but also as an ending-point regarding the conceptual representation of {\it what is really going on}. It is this same (metaphysical) prejudice regarding our limited possibilities to represent reality which has lead the philosophical community interested in QM to produce an analysis of the quantum formalism in purely ``negative terms''. That is, in terms of the failure of the quantum formalism to account for the classical representation of reality and the use of its concepts: separability, space, time, locality, individuality, identity, actuality, etc. As a consequence, the ``negative problems'' that we find commonly discussed in the literature are: {\it non-}separability, {\it non-}locality, {\it non-}individuality, {\it non-}identity, etc. These ``no-problems'' begin their analysis considering the notions of classical physics, taking for granted the very strong metaphysical presupposition according to which QM should be able to represent physical reality using exactly these same notions. This has created a paradoxical situation in the field. The commitment to our classical representation of the world is pursued by most lines of research even though everything we have learned since the very birth of QM seems to point exactly in the opposite direction. That is, that QM cannot be understood in terms of classical notions. 

The infamous Measurement Problem (MP) makes even more explicit the orthodox program with respect to quantum superpositions. The focus of the MP is that we never perceive macroscopic superpositions (e.g., \cite{HawkingPenrose, Wallace17}). Presupposing the idea that measurement outcomes are unproblematic, unquestionable {\it givens} of observation, the attention is then focused in trying to justify the observed (classical) measurement outcomes ---which are never observed as ``superposed pointer readings''. \\ 

\noindent {\it {\bf Measurement Problem:} Given a specific basis (or context), QM describes mathematically a quantum state in terms of a superposition of, in general, multiple states. Since the evolution described by QM allows us to predict that the quantum system will get entangled with the apparatus and thus its pointer positions will also become a superposition,\footnote{Given a quantum system represented by a superposition of more than one term, $\sum c_i | \alpha_i \rangle$, when in contact with an apparatus ready to measure, $|R_0 \rangle$, QM predicts that system and apparatus will become ``entangled'' in such a way that the final `system + apparatus' will be described by  $\sum c_i | \alpha_i \rangle  |R_i \rangle$. Thus, as a consequence of the quantum evolution, the pointers have also become ---like the original quantum system--- a superposition of pointers $\sum c_i |R_i \rangle$. This is why the measurement problem can be stated as a problem only in the case the original quantum state is described by a superposition of more than one term.} the question is why do we observe a single outcome instead of a superposition of them?}\\

\noindent As obvious as it might seem, this perspective grounded on ---what are considered to be--- ``classical clicks in measurement apparatus'' ---or, in more general terms, ``classical observations''--- leaves implicitly aside the analysis regarding the conceptual meaning and representation of quantum superpositions themselves. As we just mentioned, this analysis is implicitly grounded on Bohr's doctrine of classical concepts according to which \cite[p. 7]{WZ}: ``[...] the unambiguous interpretation  of any measurement must be essentially framed in terms of classical physical theories, and we may say that in this sense the language of Newton and Maxwell will remain the language of physicists for all time.'' And in this respect,  ``it would be a misconception to believe that the difficulties of the atomic theory may be evaded by eventually replacing the concepts of classical physics by new conceptual forms.'' From this viewpoint the possibilities might seem to be not so many. If we accept that we can only workout the understanding of QM through classical notions, and that we do not observe superpositions in the macro-world, then one could argue that the observation of the other terms is taking place ``elsewhere'', maybe in some other different ``world'' or ``universe''. David Deutsch \cite{Deutsch97} has argued that the only way of understanding quantum information processing and the exponential speed-up of quantum computation is by accepting the ``parallel'' existence of many worlds.\footnote{For an insightful critique of the idea that quantum computation implies the existence of many worlds, see \cite{Steane03}.} Following this line of reasoning, David Wallace argues that ``the existence of a multiverse'' can be read out ``literally'' from out the formalism of QM itself.\footnote{According to Wallace [{\it Op. cit.}] there are only three possible choices when attempting to address the interpretational problems of quantum theory. The first possibility is to give up on philosophy and accept instrumentalism, the second is to give up on physics and provide a new explanation of why superpositions are not being observed, the third and last possibility is to believe in the existence of many worlds.} There is no other option, to take the formalism ``seriously'' means to ``believe in the existence of many worlds'' \cite{Wallace12, Wallace17}. Roger Penrose, pointing to the other end of the dilemma ---namely, the observer---, argues instead that the solution to the riddle might come from the analysis of our own conscious human perception. The problem might be solved if we understand why we are not able to observe these strange superpositions \cite{Penrose17}. Both solutions are very difficult to swallow for a realist. The first seems to go against one of the most basic realist beliefs in what she observes, and that is of course, that we live in only one world. No-one has ever observed these ---supposedly, also existent--- many worlds. Nor there is any indication that these ``new worlds'' point directly to new experimental possibilities, different from those already contained in the orthodox account of QM. So adding a multiplicity of {\it unobservable worlds} in order to explain the existence of a pointer which was not observed (at the end of a quantum measurement) seems to be extending, rather than explaining, our incapability of observation. While with the MP we had accepted to some extent our failure in observing a (superposed) pointer reading, with the ``many-worlds solution'' it seems we must accept our almost complete incapability to observe most of what is going on in reality ---i.e., the unobservable multiverse. The second solution proposed by Penrose ---going back to Wigner's analysis of QM \cite{WZ}---, seems to be even worse for any realist who takes seriously the first premise of his own program according to which the description of reality must be accounted for independently of any conscious being. Indeed, claiming that understanding  consciousness might allow us to explain (quantum) physical reality seems to be going against the very basic presupposition of realism itself.  

All we have just said, about many worlds and consciousness as the only possible solutions to the MP, stands only when taking for granted the Bohrian doctrine of classical concepts. However, once we escape the gates of classical representation the possibilities to produce a coherent physical account of quantum superpositions change drastically. Once we give up on classical concepts we are not anymore committed to their explanation, either by creating unobservable worlds or by shifting our attention to the problem of consciousness. If we do what Bohr had explicitly forbid us, and create new conceptual forms, maybe there is a different way out. Maybe we need to think differently, maybe we need to ask different questions.

In \cite{deRonde17a}, one of the authors of this paper proposed a new focus in the questioning regarding the problem of quantum superpositions:\\

\noindent {\it {\bf Superposition Problem:} Given a situation in which there is a quantum superposition of more than one term, $\sum c_i | \alpha_i \rangle$, and given the fact that each one of the terms relates through the Born rule to a Meaningful Operational Statements (MOS),\footnote{Recalling the definition provided in \cite{deRonde17a}, a {\emph Meaningful Operational Statements} can be defined in the following manner: {\it Every operational statement within a theory capable of predicting the outcomes of possible measurements must be considered as meaningful with respect to the representation of physical reality provided by that theory.}} the question is how do we conceptually represent this mathematical expression? Which is the physical notion that relates to each one of the terms in a quantum superposition?}\\
 
\noindent The important developments we are witnessing today in quantum information processing demands us, philosophers of QM, to pay special attention to the novel requirements of this new technological era. The superposition problem opens the possibility to truly discuss a conceptual representation of reality which goes beyond the Newtonian metaphysical representation of systems with definite valued properties. According to this new problem, we do not need to focus in what is not observed according to classical physics; instead, we need to understand observation itself from a completely different angle.

\section{A Conceptual Representation of Quantum Superpositions\\ (Beyond the Actual Realm)} 

Both the measurement and the superposition problems imply the necessary requirement that the quantum superposition is formally defined in a contextual manner. The fact that a ray in Hilbert space, $v_{\Psi}$, describing a state of affairs can be mathematically represented in multiple bases must be explicitly considered within such definition. This obvious remark might be regarded as controversial due to the fact the contextual character of quantum superpositions has been completely overlooked within the orthodox literature. The obvious distinction between two different levels of mathematical representation has been dissolved by the (classical) semantical interpretation of the quantum formalism in terms of `systems', `states' and `properties'. Indeed, as we have discussed in detail in \cite{daCostadeRonde16}, the use of these notions has camouflaged the important distinction between an abstract vector, $v_{\Psi}$, and its different basis-dependent representations, $\sum c_i | \alpha_i \rangle$.\footnote{The semantic interpretation used in order to interpret the syntactical level of the quantum formalism presupposes implicitly PE, PNC and PI. This ``common sense''  classical interpretation has been uncritically accepted without considering the required necessary coherency between the addressed semantical and syntactical levels of the theory.} In order to make explicit these two different formal levels, we have provided in \cite{deRonde17a} the following contextual definition of quantum superpositions:\\

\noindent {\it {\sc Quantum Superposition:} Given a quantum state, $v_{\Psi}$, each i-basis defines a particular mathematical representation of $v_{\Psi}$, $\sum c_i | \alpha_i \rangle$, which we call a quantum superposition. Each one of these different basis-dependent quantum superpositions defines a specific set of \emph{Meaningful Operational Statements.} These statements are related to each one of the terms of the particular quantum superposition through the Born rule. The notion of quantum superposition is contextual for it is always defined in terms of a particular experimental context (or basis).}\\

\noindent As we shall see in the next section, this definition finds a visualizable expression in the logos approach which assumes the particular conceptual representation grounded on the notion of {\it immanent intensive power} (see \cite{deRonde17a, deRonde17c, deRonde17d, deRonde17b, deRondeMassri17a}). Before discussing how the logos approach is able to provide an intuitive account of quantum superpositions let us discuss the main ideas behind this conceptual representation. 

A specific vector $v_{\Psi}$ with no specified mathematical basis in Hilbert space represents a {\it Potential State of Affairs} (PSA); i.e., the definite potential existence of an aggregate of {\it immanent powers}, each one of them with a specific {\it potentia} or {\it intensity}. Given a PSA, $\Psi$, and the context or basis, we call a {\it quantum situation} any superposition of one or more than one power. In general, given the basis $B= \{ | \alpha_i \rangle \}$ the quantum situation $QS_{\Psi, B}$ is represented by the following superposition of immanent powers:
\begin{equation}
c_{1} | \alpha_{1} \rangle + c_{2} | \alpha_{2} \rangle + ... + c_{n} | \alpha_{n} \rangle
\end{equation}

\noindent We write the quantum situation of the PSA, $\Psi$, in the context $B$ in terms of the ordered pair given by the elements of the basis and the coordinates of the vector $v_{\Psi}$ of the PSA in that basis:

\begin{equation}
QS_{\Psi, B} = (\Psi, B)
\end{equation}

\noindent The elements of the basis, $| \alpha_{i} \rangle$, are interpreted in terms of {\it immanent powers}. The coordinates of the elements of the basis in square modulus, $|c_{i}|^2$, are interpreted as the {\it potentia} of the power $| \alpha_{i} \rangle$, respectively. Given the PSA and the context, the quantum situation, $QS_{\Psi, B}$, is univocally determined in terms of a set of powers and their respective coordinates. Let us remark that in contradistinction with the notion of {\it quantum state} ---which allows us to refer to {\it the same} quantum system irrespectively of the basis---, the definition of a {\it quantum situation} ---which discusses the aggregation of a specific group of powers--- is basis dependent and thus intrinsically contextual. In this way, the notion of quantum situation captures in a natural manner the contextual character of quantum measurements without braking down the objective nature of the formal representation.  

In QM we only observe, within the actual realm, discrete shifts of energy (quantum postulate). These discrete shifts are interpreted in terms of {\it elementary processes} which produce actual effectuations. An elementary process is the path which undertakes a power from the potential realm to its actual effectuation. This path is governed by the {\it immanent cause} which allows the power to remain potentially preexistent within the potential realm independently of its actual effectuation.\footnote{For a detailed exposition of how the measurement process is explained in terms of the notion of {\it immanent cause} see \cite{deRonde17d}.} Each power $| \alpha_{i} \rangle$ is univocally related to an elementary process represented by the projection operator $P_{\alpha_{i}} = | \alpha_{i} \rangle \langle \alpha_{i} |$. Immanent powers exist in the mode of being of ontological potentiality. An {\it actual effectuation} is the expression of a specific power within actuality. Different actual effectuations expose the different powers of a given $QS$. In order to learn about a specific PSA (constituted by an aggregate of powers and their potentia) we must measure repeatedly the actual effectuations of each power exposed in the laboratory. A laboratory is understood as the set of all possible experimental arrangements, or quantum situations, related to a particular $\Psi$. Once again, we remark that actual effectuations do not change or affect in any way the PSA. Actual effectuations provide a glimpse of the PSA in the same way that `looking at table from above' gives us only a partial profile of the object; i.e. the ASA.

A {\it potentia} quantifies the intensity of an immanent power which exists (in ontological terms) in the potential realm; it also provides a measure of the possibility to express itself (in epistemic terms) in the actual realm. Given a PSA, the potentia is represented via the Born rule. The potentia $p_{i}$ of the immanent power $| \alpha_{i} \rangle$, in the specific PSA, $\Psi$, is given by:
\begin{equation}
p_{(| \alpha_{i} \rangle, \Psi)} = \langle \Psi | P_{\alpha_{i}} | \Psi \rangle = Tr[P_{ \Psi} P_{\alpha_{i}}]
\end{equation}

In order to learn about a $QS$ we must observe not only its powers (which are exposed in actuality through actual effectuations) but we must also measure the potentia of each respective power. In order to measure the potentia of each power we need to expose the $QS$ statistically through a repeated series of observations. The potentia, given by the Born rule, coincides with the probability frequency of repeated measurements when the number of observations goes to infinity. This representation, which extends the consideration of what can be considered to be an element of physical reality, allows us to claim that the probability found via the Born rule provides objective knowledge about the PSA ---rather than ignorance about an {\it actual state of affairs} \cite{deRonde16a, deRondeMassri17a}.\\

To conclude, according to this representation, QM talks mainly about the evolution and interaction of immanent powers and their respective potentia. This new conceptual scheme provides an intuitive grasp of how to think about QM, not only within the only world we have ever observed, but also without making any reference to conscious beings. The examples discussed in \cite{deRonde16a, deRondeMassri17a} present an objective counterfactual discourse which might allow us, not only to account for what is going on according to the theory of quanta, but also to characterize what we observe according to it. In the following section we will see how this conceptual representation finds a clear visualizable exposition within the logos categorical formalism.

\section{Quantum Superpositions in the Logos Approach}

Following \cite{deRondeMassri17a}, in this section we provide  some basic notions regarding our logos categorical approach. We assume that the reader is familiar with the definition of a \emph{category}. 

Let $\mathcal{C}$ be a category and let $C$ be an object in $\mathcal{C}$. Let us define the category over $C$ denoted $\mathcal{C}|_C$.
Objects in $\mathcal{C}|_C$ are given by arrows to $C$, $p:X\rightarrow C$,  $q:Y\rightarrow C$, etc. Arrows $f:p\rightarrow q$
are commutative triangles,
\[
\xymatrix{
X\ar[rr]^f\ar[dr]_p& &Y\ar[dl]^q\\
&C
}
\]

\noindent For example, let $\mathcal{S}ets|_\mathbf{2}$ be the category of sets
over $\mathbf{2}$, where $\textbf{2}=\{0,1\}$ and $\mathcal{S}ets$ is
the category of sets.
Objects in $\mathcal{S}ets|_\mathbf{2}$
are functions from a set to $\{0,1\}$
and morphisms are commuting triangles, 
\[
\xymatrix{
X\ar[rr]^f\ar[dr]_p& &Y\ar[dl]^q\\
&\{0,1\}
}
\]
In the previous triangle, $p$ and $q$ are objects of 
$\mathcal{S}ets|_\mathbf{2}$
and $f$ is a function satisfying $qf=p$.

\

Our main interest is the category $\mathcal{G}ph|_{[0,1]}$ of graphs over the interval $[0,1]$. Let us start reviewing some properties of the category of graphs. A \emph{graph} is a set with a reflexive symmetric relation. The category of graphs extends naturally the category of sets and the category of aggregates (objects with an equivalence relation). A set is a graph without edges. An {\it aggregate} is a graph  in which the relation is transitive. More generally, we can assign to a category a graph, where the objects are the nodes of the graph and there is an edge between $A$ and $B$ if $\hom(A,B)\neq\emptyset$.
Given that in a category we have a composition law, the resulting graph is an aggregate.

\begin{definition}
We say that a graph $\mathcal{G}$ is \emph{complete} if there is an edge between two arbitrary nodes. A \emph{context} is a complete subgraph (or aggregate) inside $\mathcal{G}$. A \emph{maximal context} is a context not contained properly in another context. If we do 
not indicate the opposite, when we refer to contexts we will be implying maximal contexts.
\end{definition}

\noindent For example, let $P_1,P_2$ be two elements of a graph $\mathcal{G}$. 
Then, $\{P_1, P_2\}$ is a contexts if $P_1$ is related to $P_2$, $P_1\sim P_2$. Saying differently, if there exists an edge between $P_1$ and $P_2$. In general, a collection of elements $\{P_i\}_{i\in I}\subseteq \mathcal{G}$ determine a {\it context} if $P_i\sim P_j$ for all $i,j\in I$. Equivalently, if the subgraph with nodes $\{P_i\}_{i\in I}$ is complete. 

\

Given a Hilbert space $\mathcal{H}$, 
we can define naturally a graph $\mathcal{G}=\mathcal{G}(\mathcal{H})$
as follows. Following [{\it Op. cit.}] the nodes are represented by {\it immanent powers} and there exists an edge between 
$P$ and $Q$ if and only if $[P,Q]=0$.
This relation makes $\mathcal{G}$ a graph (the relation is not transitive). We call this relation {\it quantum commuting relation}.\footnote{In QM the commuting relation is given by $[X_i,X_j]=0$, $[P_i,P_j]=0$, $[X_i,P_j]=i\hbar\delta_{ij}{\bf 1}$.}
Let us mention some properties of this
graph. For simplicity, assume that $\dim(\mathcal{H})=4$, hence the contexts
have 4 nodes (or immanent powers).\footnote{In general the dimension might be arbitrarily large. We choose here a small dimension to make the picture more easy.} We can picture the graph as
\begin{center}
\includegraphics[width=14em]{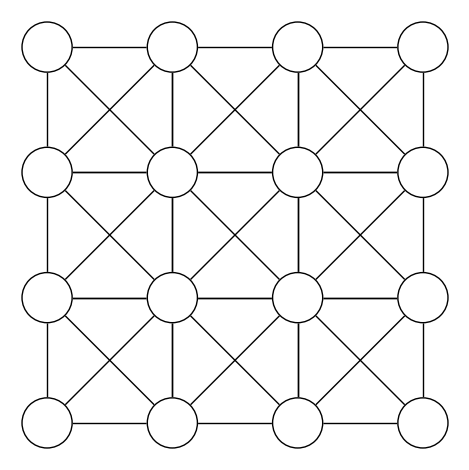}
\end{center}
Choosing a basis of $\mathcal{H}$ is equivalent to choosing a context,
\begin{center}
\includegraphics[width=14em]{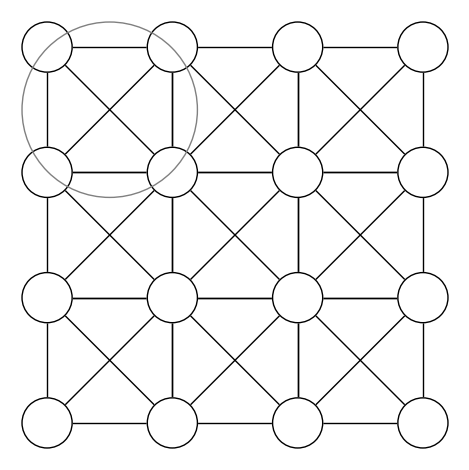}
\end{center}
But of course, within the same graph, we can also choose a different context,\footnote{In the orthodox interpretation the choice of the basis has been interpreted as ``the act of observing the quantum system''. This relation is not a direct one. Obviously, I can always choose a basis (on paper) without the need of performing any experiment (in the lab). Later on, the imposition of a binary valuation to the chosen basis is interpreted as something that ``actually takes place in reality''. The inference derived by Bohr in his reply to EPR \cite{Bohr35} is that the choice of the context determines the object under study. Today, in the words of Butterfiled \cite{Butterfield17}, the widespread conclusion is that: ``the properties of a system are different whether you look at them or not.'' For a detailed analysis we refer to \cite{deRonde16c, deRondeMassri17a}.}
\begin{center}
\includegraphics[width=14em]{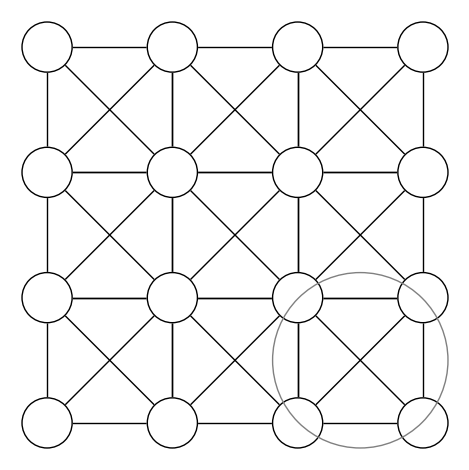}
\end{center}
A remarkable fact about this graph is that we can \emph{generate}
the whole graph starting with only one context and applying as many changes of basis as required.
For example, if we start with the following basis,
\begin{center}
\includegraphics[width=7em]{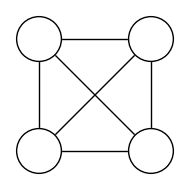}
\end{center}
we can apply a rotation (change of basis) to get a new context,
\begin{center}
\includegraphics[width=11em]{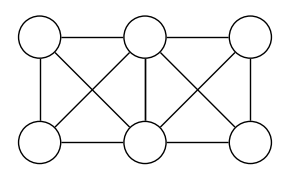}
\end{center}
Repeating this procedure we can then obtain the whole graph $\mathcal{G}$,
\begin{center}
\includegraphics[width=14em]{trescsg.png}
\end{center}
As a corollary of this procedure, we have the following:

\begin{coro}
Let $\mathcal{H}$ be a Hilbert space and let $\mathcal{G}$
be the graph of immanent powers with the commuting relation given by QM. 
It then follows that: 
\begin{enumerate}
\item The graph $\mathcal{G}$ contains all the contexts. 
\item Each context is capable of generating the whole graph $\mathcal{G}$.
\end{enumerate}
\end{coro}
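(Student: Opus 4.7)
\smallskip
\noindent\emph{Proof plan.}\hspace*{2mm} My strategy is to treat the two items separately and to reduce everything to the transitivity of the unitary group action on orthonormal bases of $\mathcal{H}$. Throughout, I will identify the nodes of $\mathcal{G}$ with rank-one orthogonal projectors $P_{\psi}=|\psi\rangle\langle\psi|$, and I will use the standard fact that in finite dimension a maximal family of pairwise commuting rank-one projectors is the set of spectral projectors of some orthonormal basis of $\mathcal{H}$. With that dictionary in hand, contexts (maximal cliques for the commuting relation) correspond bijectively to orthonormal bases.

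For item~(1), the argument is essentially bookkeeping. By construction every immanent power appears as a node of $\mathcal{G}$, and the commuting relation is precisely the edge relation. So given any context $\mathcal{C}$, each node of $\mathcal{C}$ is already a node of $\mathcal{G}$ and each edge of $\mathcal{C}$ is inherited from $\mathcal{G}$. Hence $\mathcal{C}$ sits inside $\mathcal{G}$ as a (complete) subgraph. The only subtlety worth spelling out is that we need to verify the \emph{maximality} aspect, namely that the bases of $\mathcal{H}$ exhaust the maximal cliques; this follows from the fact that a rank-one projector commuting with every element of an orthonormal basis must itself be one of those basis projectors.

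For item~(2), I will start with a fixed context $\mathcal{C}_{0}$, associated with an orthonormal basis $B_{0}=\{|e_{1}\rangle,\ldots,|e_{n}\rangle\}$, and show that every node $P_{\psi}$ of $\mathcal{G}$ is reached after finitely many changes of basis from $B_{0}$. The key two ingredients are: (a)~a Gram--Schmidt type argument showing that any unit vector $|\psi\rangle$ can be completed to an orthonormal basis $B'$ of $\mathcal{H}$, so that $P_{\psi}$ belongs to some context $\mathcal{C}'$; and (b)~the transitivity of the unitary group $U(\mathcal{H})$ on the set of orthonormal bases, which provides a unitary $U$ with $U(B_{0})=B'$. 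Since applying such a $U$ is exactly what is meant by a ``change of basis'', a single step carries $\mathcal{C}_{0}$ to $\mathcal{C}'$, and in particular exposes $P_{\psi}$. Varying $P_{\psi}$ over all nodes, the union of the contexts so produced exhausts $\mathcal{G}$, which is the desired generation statement.

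The only delicate point I anticipate is a conceptual rather than computational one: the statement speaks loosely about ``applying as many changes of basis as required'', so I would make precise that a change of basis means the action of an element of $U(\mathcal{H})$ on a context, and then the whole content of item~(2) reduces to the transitivity of $U(\mathcal{H})$ on orthonormal bases together with the basis-completion lemma. Once that formalization is fixed, no further work is needed and both items follow. \qed
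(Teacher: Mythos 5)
Your argument is correct and follows essentially the same route as the paper, whose ``proof'' is literally the single line ``Follows from the previous arguments,'' referring back to the informal pictorial discussion of generating the graph by starting from one basis and applying rotations. What you add is the rigor the paper omits --- identifying maximal cliques with orthonormal bases via the fact that commuting rank-one projectors are equal or orthogonal, and reducing item~(2) to basis completion plus transitivity of the unitary group on orthonormal bases --- so your write-up is a faithful formalization rather than a different proof.
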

\begin{proof}
Follows from the previous arguments.
\qed
\end{proof}

\

As we mentioned earlier, an object in $\mathcal{G}ph|_{[0,1]}$ consists in 
a map $\Psi:\mathcal{G}\rightarrow [0,1]$, where 
$\mathcal{G}$ is a graph. As before, to each node 
$P\in\mathcal{G}$, we assign through the Born rule the number $p=\Psi(P)$, but this time, 
$p$ is a number between $0$ and $1$.
The category $\mathcal{G}ph|_{[0,1]}$ has very nice categorical properties 
\cite{quasitopoi, graphtheory}, and is a \emph{logos}. As discussed in detail in \cite{deRondeMassri17a}, we call this 
map $\Psi:\mathcal{G}\rightarrow [0,1]$ a {\it Potential State of Affairs} (PSA for short)
and we picture it as follows,

\begin{center}
\includegraphics[width=14em]{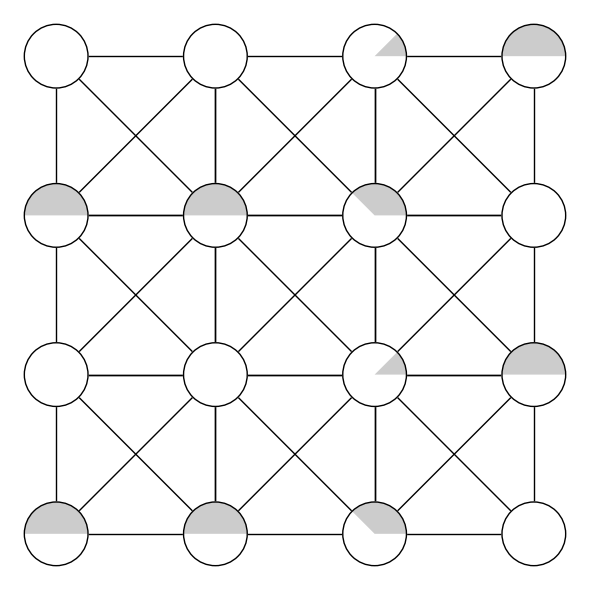}
\end{center}

\begin{definition}
Let $\mathcal{H}$ be Hilbert space and let $\rho$ be 
a positive semi-definite self-adjoint operator 
of the trace class (a density matrix).
Take $\mathcal{G}$ as the graph of immanent powers with the quantum commuting relation. 
To each immanent power $P\in\mathcal{G}$
apply the born rule to get the number $\Psi(P)\in[0,1]$, which is called the potentia (or intensity) of the power $P$. 
Then, $\Psi:\mathcal{G}\rightarrow [0,1]$
defines an object in $\mathcal{G}ph|_{[0,1]}$.
We call this map a \emph{Potential State of Affairs}.
\end{definition}

\noindent Notice that this mathematical representation is objective in the sense that it relates, in a coherent manner and without internal contradictions, the multiple contexts (or aggregates) to the whole PSA. Contrary to the contextual (relativist) Bohrian ``complementarity solution'', there is in this case no need of a (subjective) choice of a particular context in order to define the ``physically real'' state of affairs. The state of affairs is described completely by the whole graph (or $\Psi$), and the contexts bear an invariant existence independently of any choice. 

\begin{theo}
Let $\mathcal{H}$ be Hilbert space and 
let $\Psi:\mathcal{G}\rightarrow [0,1]$ be a PSA
associated to a density matrix $\rho$.
Let $\{|\alpha_i\rangle\}$  be an orthonormal basis.
Then, the coordinates of $\rho$ in basis $\{|\alpha_i\rangle\langle \alpha_j|\}$
can be obtained from $\Psi$. 
\end{theo}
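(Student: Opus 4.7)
The plan is to recover the matrix entries $\rho_{ij}=\langle\alpha_i|\rho|\alpha_j\rangle$ one at a time from values of $\Psi$ evaluated on suitably chosen rank-one projectors, exploiting the fact that $\Psi$ is defined on the \emph{entire} graph of immanent powers, not only on a single context.

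First I would handle the diagonal. For each basis vector $|\alpha_i\rangle$ the rank-one projector $P_{\alpha_i}=|\alpha_i\rangle\langle\alpha_i|$ is a node of $\mathcal{G}$, and the Born rule gives $\Psi(P_{\alpha_i})=\mathrm{Tr}(\rho P_{\alpha_i})=\rho_{ii}$, so the diagonal entries are read off directly.

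For the off-diagonal part, the key idea is interference: introduce the four auxiliary unit vectors
\[
|\phi^{\pm}_{ij}\rangle=\tfrac{1}{\sqrt{2}}\bigl(|\alpha_i\rangle\pm|\alpha_j\rangle\bigr),\qquad |\psi^{\pm}_{ij}\rangle=\tfrac{1}{\sqrt{2}}\bigl(|\alpha_i\rangle\pm i|\alpha_j\rangle\bigr),
\]
and the associated projectors $P_{\phi^{\pm}_{ij}}$, $P_{\psi^{\pm}_{ij}}$. These are again immanent powers, hence nodes of $\mathcal{G}$, so $\Psi$ assigns them values. A short expansion using $\rho_{ji}=\overline{\rho_{ij}}$ (Hermiticity of $\rho$) gives
\[
\Psi(P_{\phi^{\pm}_{ij}})=\tfrac{1}{2}\bigl(\rho_{ii}+\rho_{jj}\pm 2\,\mathrm{Re}\,\rho_{ij}\bigr),\qquad \Psi(P_{\psi^{\pm}_{ij}})=\tfrac{1}{2}\bigl(\rho_{ii}+\rho_{jj}\mp 2\,\mathrm{Im}\,\rho_{ij}\bigr).
\]
From these one solves
\[
\mathrm{Re}\,\rho_{ij}=\tfrac{1}{2}\bigl(\Psi(P_{\phi^{+}_{ij}})-\Psi(P_{\phi^{-}_{ij}})\bigr),\qquad \mathrm{Im}\,\rho_{ij}=\tfrac{1}{2}\bigl(\Psi(P_{\psi^{-}_{ij}})-\Psi(P_{\psi^{+}_{ij}})\bigr),
\]
which determines $\rho_{ij}$ (and hence $\rho_{ji}=\overline{\rho_{ij}}$) purely in terms of $\Psi$.

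The step I expect to require the most care is really conceptual rather than computational: making precise that the auxiliary projectors $P_{\phi^{\pm}_{ij}}$ and $P_{\psi^{\pm}_{ij}}$ genuinely live in the graph $\mathcal{G}$ even though they typically do \emph{not} commute with the chosen context $\{P_{\alpha_i}\}$. This is exactly the content of the preceding corollary, that a single context generates the full graph by changes of basis, so $\Psi$ is available at any rank-one projector one wishes to query. Once this is invoked, the formulas above are linear combinations of values of $\Psi$, and the theorem follows.
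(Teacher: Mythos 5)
Your proposal is correct and is essentially the paper's own argument made explicit: the paper's proof simply appeals to quantum state tomography over several contexts (citing the tomography literature), and your interference projectors $P_{\phi^{\pm}_{ij}}$, $P_{\psi^{\pm}_{ij}}$ are precisely such a tomographic reconstruction, with the formulas for $\mathrm{Re}\,\rho_{ij}$ and $\mathrm{Im}\,\rho_{ij}$ checking out. Your concluding remark that these auxiliary projectors are legitimate nodes of $\mathcal{G}$ on which $\Psi$ is defined (via the corollary that every context, and hence every rank-one power, lies in the graph) is exactly the point that makes the citation-level proof rigorous, so if anything your version supplies detail the paper leaves implicit.
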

\begin{proof}
Choosing several contexts $\mathcal{C}$ it is possible (from a theoretical and practical
point of view) to recover $v$ (or $\rho$) by using the theory of quantum tomography, 
(e.g., \cite{altepeter}).\qed\\
\end{proof}

\begin{coro}
The knowledge of a PSA $\Psi$ is equivalent to the knowledge of
the density matrix $\rho_{\Psi}$. In particular, if $\Psi$
is defined by a vector $v_{\Psi}$, $\|v_{\Psi}\|=1$, then
we can recover the vector from $\Psi$.
\end{coro}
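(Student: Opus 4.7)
The approach is to treat the corollary as two dual claims, both of which reduce to the preceding theorem together with elementary facts about density operators. The plan is first to establish the equivalence of PSAs and density matrices, and then to specialize to the pure-state case.

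For the equivalence, I would argue both directions. The forward direction, that knowledge of $\rho_\Psi$ determines $\Psi$, is immediate from the definition of a PSA: for each immanent power $P\in\mathcal{G}$, the Born rule prescribes $\Psi(P)=\mathrm{Tr}[\rho_\Psi P]$, so $\rho_\Psi$ fixes the value of $\Psi$ at every node of the graph. The reverse direction, that $\Psi$ determines $\rho_\Psi$, is precisely the content of the previous theorem; concretely, by restricting $\Psi$ to sufficiently many contexts one reconstructs all the matrix entries of $\rho_\Psi$ in the basis $\{|\alpha_i\rangle\langle\alpha_j|\}$ via the quantum tomography procedure already invoked.

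For the second claim I would proceed in two steps. First, apply the equivalence just established to recover $\rho_\Psi$ from $\Psi$. Second, observe that when $\Psi$ arises from a unit vector $v_\Psi$, the associated density matrix is the rank-one projector $\rho_\Psi=|v_\Psi\rangle\langle v_\Psi|$; extracting $v_\Psi$ from this projector amounts to selecting the unique (up to normalization) eigenvector with nonzero eigenvalue, and the condition $\|v_\Psi\|=1$ fixes its length. No further use of the graph structure is needed beyond what is already provided by the preceding theorem.

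The step I expect to require the most care is the global phase. The rank-one projector $|v_\Psi\rangle\langle v_\Psi|$ is invariant under the replacement $v_\Psi\mapsto e^{i\theta}v_\Psi$, so strictly speaking the vector is only reconstructed up to an overall phase. This is the familiar identification of pure states with rays, and within the logos framework this phase carries no physical content since all potentia are computed via squared moduli; nevertheless, the assertion ``we can recover the vector'' should be understood modulo this unavoidable ambiguity, and I would state this explicitly to avoid over-claiming.
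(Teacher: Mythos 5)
Your proof is correct and takes essentially the same route as the paper: the equivalence is exactly the preceding tomography theorem (together with the trivial Born-rule direction $\rho_\Psi\mapsto\Psi$), and the pure case is handled by applying that theorem to the rank-one matrix $v_{\Psi}v_{\Psi}^t$ and reading off the eigenvector. Your caveat about the global phase is a genuine refinement the paper silently omits --- the vector is only recoverable up to a factor $e^{i\theta}$, so you are right that the claim ``we can recover the vector'' must be read modulo that unavoidable ambiguity.
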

\begin{proof}
The first part follows from the previous Theorem. To prove the second part,
apply the previous Theorem to the density matrix $v_{\Psi}v_{\Psi}^t$.
\qed\\
\end{proof}

We are now ready to define density matrices and quantum superpositions within our logos approach.

\begin{definition}
Let $\Psi:\mathcal{G}\rightarrow [0,1]$ be a PSA
and let us choose a complete subgraph $\mathcal{C}
\subseteq\mathcal{G}$ (i.e. an orthonormal basis 
$\{|\alpha_i\rangle\}$). 
The \emph{density matrix} denoted $\rho_{\Psi,{\mathcal C}}$, 
associated to $\Psi$ in basis $\{|\alpha_i\rangle\langle \alpha_j|\}$ is 
denoted as
\[
\rho_{\Psi,{\mathcal C}}=\sum_{ij}^\infty  c_i \overline{c_j} \ |\alpha_i\rangle\langle\alpha_j|.
\]
\end{definition}

\begin{definition}
Let $\mathcal{H}$ be Hilbert space and 
let $\Psi:\mathcal{G}\rightarrow [0,1]$ be a PSA
associated to a vector $v_{\Psi}\in\mathcal{H}$, $\|v_{\Psi}\|=1$.
Let $\mathcal{C}$ be a context given by an orthonormal
basis $\{|\alpha_i\rangle\}$.
A \emph{quantum superposition} (or \emph{quantum situation}) 
relative to $\mathcal{C}$
is given by
\[
QS_{\Psi,\mathcal{C}}=\sum_i c_i|\alpha_i\rangle.
\] 
The dependence of the superposition with respect to $\Psi$
and $\mathcal{C}$ is justified by the previous Theorem.
\end{definition}

It is important to mention that the remarkable properties
of the graph $\mathcal{G}$ can lead to some confusions.
A trivial remark is that $\rho_{\Psi,\mathcal{C}}$ is different from $\Psi$
and clearly different from $\mathcal{C}$ and different from $QS_{\Psi,\mathcal{C}}$.
From the definition of a PSA and the properties of the 
graph $\mathcal{G}$ it is possible to recover the value of $\Psi$ over 
any orthonormal basis and over any
context. Hence, a superposition codifies the values of $\Psi$
over the whole graph $\mathcal{G}$.
\begin{theo}
Let $\mathcal{C}$ be a context and let $QS_{\Psi,\mathcal{C}}$
be a superposition associated to some unknown PSA, $\Psi$.
Then, there exists a unique PSA, $\Psi$, such that
the superposition associated to $\Psi$ over $\mathcal{C}$ is equal to 
$QS_{\Psi,\mathcal{C}}$.
\end{theo}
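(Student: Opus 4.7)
The plan is to invert the passage from a PSA to a superposition, in two steps: first reconstruct an element of $\mathcal{H}$ from the formal expansion $\sum_i c_i|\alpha_i\rangle$, and then push that vector back to a map $\mathcal{G}\to[0,1]$ via the Born rule, applying the previous Corollary for uniqueness.

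More concretely, I would start by taking the data $QS_{\Psi,\mathcal{C}}=\sum_i c_i|\alpha_i\rangle$ and the orthonormal basis $\{|\alpha_i\rangle\}$ defining $\mathcal{C}$, and define the candidate vector $v:=\sum_i c_i|\alpha_i\rangle\in\mathcal{H}$. Because the $|\alpha_i\rangle$ are orthonormal, we have $\|v\|^2=\sum_i|c_i|^2$, and this equals $1$ since the superposition came from a PSA attached to a unit vector (so the squared moduli of the coefficients are potentiae summing to $1$ on the context $\mathcal{C}$). Thus $v$ is a well-defined unit vector in $\mathcal{H}$.

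Next, I would form the density matrix $\rho:=vv^{\,t}$ and define $\Psi:\mathcal{G}\to[0,1]$ by the Born rule $\Psi(P):=\mathrm{Tr}[P_\Psi P]$ at every immanent power $P\in\mathcal{G}$, exactly as in the Definition of PSA. This yields an object of $\mathcal{G}ph|_{[0,1]}$, so a PSA. By construction, the coordinates of $v$ in the basis $\{|\alpha_i\rangle\}$ are precisely the original $c_i$, so the superposition of this $\Psi$ over $\mathcal{C}$ coincides with the prescribed $QS_{\Psi,\mathcal{C}}$. This handles existence.

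For uniqueness, suppose $\Psi_1$ and $\Psi_2$ are two PSAs with the same superposition over $\mathcal{C}$. Write $\Psi_k$ in terms of unit vectors $v_{\Psi_k}$ and expand each in the basis $\{|\alpha_i\rangle\}$; the equality $\sum_i c_i^{(1)}|\alpha_i\rangle=\sum_i c_i^{(2)}|\alpha_i\rangle$ forces $c_i^{(1)}=c_i^{(2)}$ for all $i$ by orthonormality, hence $v_{\Psi_1}=v_{\Psi_2}$. By the previous Corollary the PSA is recoverable from its associated vector, so $\Psi_1=\Psi_2$. The only real subtlety I would expect is a global-phase issue: a superposition determines $v$ only up to overall phase, but since a PSA is defined through the density matrix $vv^{\,t}$, that phase drops out and the map from superpositions to PSAs is well defined. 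The main (mild) obstacle is therefore bookkeeping: checking that the Born-rule extension from a single context $\mathcal{C}$ genuinely produces values on every node of $\mathcal{G}$, which is precisely what the preceding Theorem (reconstruction of $\rho$ from several contexts, or equivalently from one context plus the graph structure) guarantees.
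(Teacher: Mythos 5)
Your proposal is correct and follows essentially the same route as the paper: the paper's own proof is a one-line remark that a change of basis from $\mathcal{C}$ to any other context produces the new coefficients compatibly with the Born rule, and your argument simply makes that reconstruction explicit (build $v=\sum_i c_i|\alpha_i\rangle$, form $v v^{\,t}$, extend to all of $\mathcal{G}$ via the Born rule, and get uniqueness from the earlier Corollary). The remark about the global phase dropping out at the level of the density matrix is a useful detail the paper leaves implicit.
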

\begin{proof}
A change of basis from a context $\mathcal{C}$ to a context $\mathcal{C}'$ produces
the values $c_i'$ in a compatible way with Born's rule.
\qed
\end{proof}\\

Summing up, to a PSA $\Psi$ we can assign a multiplicity of different superpositions
\[
QS_{\Psi,\mathcal{C}_1},QS_{\Psi,\mathcal{C}_2},\ldots,QS_{\Psi,\mathcal{C}_n}
\]
one for each maximal context $\{\mathcal{C}_i\}_{i\in I}$. Even more so, 
each superposition determines the others and the whole PSA,
\[
QS_{\Psi,\mathcal{C}_1}\rightsquigarrow
QS_{\Psi,\mathcal{C}_2}\rightsquigarrow 
\ldots\rightsquigarrow
QS_{\Psi,\mathcal{C}_n}\rightsquigarrow
\Psi.
\]

Notice that {\it binary causal powers}, understood as properties of a system that can (or could) become actual in a future instant of time, are incapable of generating the PSA. In conceptual terms, a quantum superposition is defined as a quantum situation; i.e., as an aggregate of commuting immanent powers with their respective potentia. But a binary valuation over an aggregate of commuting immanent powers is not enough to produce the PSA. In order to generate the whole PSA we require the intensive data provided via the Born rule and carried by the potentia of each respective power. This is the reason why the attempts to understand QM in binary terms remarkably fail. The intensive information of the state of affairs described by QM cannot be subsumed by a classical binary model ---something we already know not only from KS theorem but also from Boole-Bell type inequalities \cite{Bell66, Pitowsky94, Svozil17}. This might explain the reason why the notions of `system', `state' and `property' ---grounded on a binary idea of existence--- simply fail when attempting to represent what QM is talking about. QM requires ---as the Born rule indicates explicitly--- an intensive conceptual scheme which is able to account for the statistical nature already present within the orthodox formalism. 

Let us also remark that our logos approach is more general than the Hilbert space formalism. This extension allows us to provide a way of visualizing its elements, adding an {\it anschaulich} content to the formalism through graphs.\footnote{For an interesting analysis of the fruitfulness of diagrams with respect to the representation of mathematical abstract relations, see \cite{Carter17}.} Our logos approach gives a clear insight to the notions compatible with the quantum formalism. In particular, as we have seen above, the notion of `quantum superposition' can be now pictured in a manner which helps us understanding its specific constituents and relations. Taking into account our logos scheme, let us now advance and analyze in more detail some specific characteristics and features of quantum superpositions themselves.

\subsection{Epistemic Contextuality}

The contextual character of quantum superpositions is an aspect of outmost importance when attempting to conceptually represent them. Let us discuss an explicit example in order to clarify these ideas. Consider a typical Stern-Gerlach type experiment where we have produced a PSA, $\Phi$, mathematically represented in the $x$-basis by the ket $|\uparrow_x\rangle$. This can be easily done by filtering the terms $| \downarrow_x\rangle$ of a Stern-Gerlach situation arranged in the $x$-direction. It is a mathematical fact that this ket ---orthodoxly interpreted as `the state of a quantum system'--- can be represented in different bases which diagonalyze a complete set of commuting observables (or immanent powers, in the logos approach). Each basis-dependent representation of the PSA, $\Phi$, is obviously different when considering its physical content. Indeed, it is common to the {\it praxis} of physicists to relate each different basis with a specific measurement context. However, regardless of the widespread claim according to which ``the choice of the context restores classicality''; quantum contexts cannot be related easily to any classical notion without giving up the whole formalism \cite{deRonde16c}. We will come back to this outmost important point in the following subsections. 

As we discussed in detail in \cite{deRonde17a}, the notion of context (or basis), and thus also that of superposition as we defined it above, possesses a physical content which relates a specific subset of epistemic inquiries regarding the abstract PSA, $\Phi$, to a set of {\it Meaningful Operational Statements} (MOS) ---each one of them capable of providing an answer to a specific question (see footnote 6). All MOS, since they imply a specific measurement situation and refer explicitly to the performing act of measuring, are context-dependent. In our example, we know that if we measure  $\Phi$ in the context given by the $x$-basis we will observe with certainty `spin up'. Thus, the MOS related to the ket `$| \uparrow_x\rangle$' is of course: ``if the SG is in the $x$-direction then the result will be `+' with certainty (probability = 1)''. 

In our logos approach, each MOS is related to an immanent power and its specific potentia. The shift from the operational-epistemic discursive level to the conceptual-ontological discursive level makes possible to discuss {\it what is really going on}, independently of specific measurement situations or outcomes. In our Stern-Gerlach example, we can now see that what we have in fact is a particular quantum situation, $QS_{\Phi, x}$, constituted by an immanent power $P_{\uparrow_x}$, $| \uparrow_x\rangle$, with potentia 1 and an immanent power $P_{\downarrow_x}$, $| \downarrow_x\rangle$, with potentia 0.
\begin{center}
\includegraphics[width=4cm]{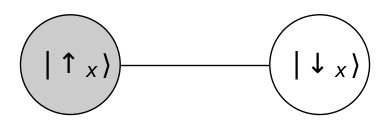}
\end{center}

\noindent Let us remark once again, that even though quantum superpositions are intrinsically contextual, the definition of each immanent power and its potentia is completely non-contextual \cite{deRonde16a, deRondeMassri17a}. 

Going now back to the operational level, physicists are taught that if they want to learn what are the possible outcomes in a different context, for example if they turn the SG to the $y$-direction, then they just need to rewrite the ket $| \uparrow_x\rangle$ in the $y$-basis. Through this change of basis, and according to our previous definition, physicists obtain a different quantum superposition, $QS_{\Phi, y}$:
$$ \frac{1}{\sqrt{2}} | \uparrow_y\rangle +\ \frac{1}{\sqrt{2}} | \downarrow_y\rangle$$

\noindent Writing $| \uparrow_x\rangle$ in the $y$-basis generates a new superposition which implies the analysis of a different section of the graph $\mathcal{G}$. The new superposition also relates itself to the following two MOS. The first one is that ``if the SG is in the $y$-direction then the result will be `+' with probability $1/2$''. The second MOS is: ``if the SG is in the $y$-direction then the result will be `-' with probability $1/2$''. In our logos approach, this subsection of the graph $\mathcal{G}$ relates to the quantum situation where we are considering the powers $| \uparrow_y\rangle$ and $| \downarrow_y\rangle$, both with 
potentia $1/2$, 
\begin{center}
\includegraphics[width=4cm]{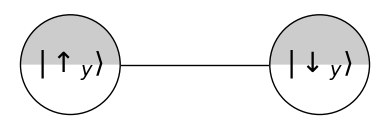}
\end{center}

The same will happen with any $i$-context of inquiry (determined by a particular $i$-basis), each one of them will be related to a particular quantum superposition, $QS_{\Phi, i} = c_{i1} | \uparrow_i\rangle + \ c_{i2} | \downarrow_i\rangle$, a specific set of MOS in the epistemic level of analysis and a specific subset of powers, \{$P_i$\}, and potentia, \{$p_i$\}, in the ontological level of discourse. Thus, we have provided an explicit bridge between the MOS derived from the mathematical formalism and a conceptual representation. In this way we are able to connect the operational epistemic level of inquiry to the conceptual ontic level of objective representation \cite{deRonde16a}. Our conceptual representation in terms of intensive immanent powers is also capable to escape KS contextuality, generating a coherent counterfactual objective discourse which includes all the MOS contained within the theory. 

In order to be even more clear regarding our explanation, and before concluding, let us recall from \cite{deRonde16c} the following two very helpful definitions:\\

\noindent {\it {\sc Epistemic Incompatibility of Measurements:} Two contexts are epistemically incompatible if their measurements cannot be performed simultaneously.}\\

\noindent {\it  {\sc Ontic Incompatibility of Existents:} Two contexts are ontically incompatible if their formal elements cannot be considered as simultaneously preexistent.}\\

What we have seen in this subsection is that even though a quantum situation (or context) $X$ can be {\it epistemically incompatible} with another quantum situation (or context) $Y$, all intensive immanent powers are {\it ontologically compatible}. This is a direct consequence of the fact we can define each immanent power and its respective potentia  ---through the Born rule and the Graph--- completely independently of any context. Contexts play no role whatsoever in the definition of what constitutes physical reality. While ---due to KS conetxtuality and the impossibility of GBV--- projection operators interpreted as properties are ontically incompatible in the quantum Hilbert formalism, intensive immanent powers ---escaping KS contextuality through GIV--- are ontically compatible. This means that the contextuality present in quantum situations ---understood as aggregates of immanent powers--- is purely epistemic, not ontological. Everything we have explained is very easily visualized in our logos approach, thus adding an {\it anschaulich} content to the theory. While quantum situations expose the epistemic incompatibility of measurements, immanent powers remain always ontically compatible.

By providing an invariant global account of $\Psi:\mathcal{G}\to[0,1]$, we have restored the idea according to which {\it the representation provided by a closed physical theory provides an objective account of physical reality}. The quantum reality ``behind the veil'' ---to use an expression from Bernard D'Espagnat (see e.g., \cite{DEspagnat03})--- is fully expressed by the formal representation provided in the PSA.\footnote{This obviously does not imply the naive realist claim that ---consequently--- we are finally representing {\it reality as it is.} See for a discussion of this important point: \cite{deRonde16b}.} Accordingly, measurements only play an epistemic role; subjective choices ---imposed by the complementarity principle--- are not required in order to define the state of affairs. This resolution to the quantum riddle goes obviously against Bohr's famous claim that the most important epistemological lesson to be learnt from QM is that, {\it we are not only spectators but also actors in the great drama of existence.} Our proposal invites everyone to recall the fact that physical theories are ---since their origin--- attempts to provide (ontological) representations of {\it physis}. And that we can only understand ourselves, (empirical) subjects, as being part of a representation.\footnote{Wittgenstein's assertion that ``the limits of my language mean the limits of my world'', captures the fact we are always speaking from {\it within} a particular representation.} We, subjects, are not the center of the Universe. We (empirical) subjects, are not an empire within an empire. So please, let us invite everyone ---including Bohr--- to just sit, relax and enjoy the quantum drama pictured objectively in the following terms.
\begin{center}
\includegraphics[width=14em]{csg-pac.png}
\end{center}
                 
The price we have willingly paid in order to restore an objective representation ---in which, contrary to the Bohrian orthodoxy, measurement plays only an epistemic role and the (subjective) choice of the context is not required in order to make reference to the state of affairs--- is to consider a potential realm of existence which is completely independent of the actual realm. Such potential realm exists in this world alone and is observable according to the specific {\it conditions of observability} provided by immanent intensive powers themselves. 

As we shall discuss in the following subsections, there are two main reasons why quantum superpositions cannot be related to an objective representation grounded on the actual (binary) mode of existence.\footnote{This is of course in case we do not want to overpopulate reality with unobservable worlds in order to explain a `pointer reading' we did not observe, or shift completely the focus to observability itself and end up discussing the problem of human consciousness.} The first one, relates to the KS-contextual character of binary valuations themselves; which precludes the possibility of interpreting each ket as a definite valued property. The second, deals with the paraconsistent character of superposed Schr\"odinger cat states \cite{daCostadeRonde13} and the impossibility to interpret the probabilities of outcomes in terms of ignorance about an actual state of affairs \cite{deRonde16a, deRonde17a}. Since we already addressed the problem of KS contextuality in \cite{deRondeMassri17a} let us now concentrate in the latter reasons related to paraconsistency and the interpretation of probability.

\subsection{Potential Paraconsistency}

In \cite{daCostadeRonde13} one of the authors of this paper together with Newton da Costa, argued in favor of considering quantum superpositions in terms of a paraconsistent logic. The idea came from the acknowledgment that Schr\"odinger cat states, composed by contradictory properties ---such as `atom decayed' and `atom not decayed'---, seem to constitute nonetheless an individual quantum existent. The intuition is quite straightforward. As explained by Valia Allori in \cite[p. 175]{Allori16}:  ``many have thought that the real lesson of quantum mechanics is that the dream of the scientific realist is impossible, since quantum mechanics has been taken to suggest that physical objects have contradictory properties, like being in a place and not being in that place at the same time, or that properties do not exist at all independently of observation.'' Of course, these ideas were already implicit in the famous 1935 ``cat paper'' in which Schr\"odinger discussed the absurdity of using the classical notion of `cat' in order to interpret quantum superpositions \cite{Schr35}. Elsewhere \cite{deRonde17a}, we have argued that this paper can be read as an {\it ad absurdum} proof of the untenability of the classical notion of entity (object or system) to account for a coherent interpretation of quantum superpositions. Indeed, the very categorical precondition of the physical notion of `system' (`entity' or `object') is that any existent of this type is constituted by definite valued {\it non-contradictory} properties. 

The mentioned paper, \cite{daCostadeRonde13}, inaugurated a debate about the possibility to consider a paraconsistent logic for QM (see \cite{ArenhartKrause14a, ArenhartKrause15, ArenhartKrause16, daCostadeRonde15, daCostadeRonde16, deRonde15a, deRonde17e, Eva16, KrauseArenhart15}). In particular, D\'ecio Krasue and Jonas Arenhart have raised arguments against the idea that quantum superpositions should be understood in terms of {\it contradictory} properties. Instead, they claim that the notion of {\it contrariety} is much better suited to discus about superpositions in quantum theory. In a recent reply to their criticisms, one of us has argued that the confusion comes from imposing ---implicitly--- an actualist metaphysical representation within the discussion about quantum superpositions and their measurement \cite{deRonde17e}. Since we have argued that superpositions provide a formal account in terms of an ontological potential realm, the contradiction we are discussing about should be obviously understood as making reference to that same realm. This means we've been discussing about a {\it potential contradictions}, not about {\it actual contradictions} ---as Krause and Arenhart presuppose implicitly in their analysis. In order to be more rigorous, let us recall the traditional definitions of the famous Aristotelian square of opposition and the specific meaning of both {\it contradiction} and {\it contrariety}.

\begin{enumerate}
\item[]{\bf Contradiction Propositions:} $\alpha$ and $\beta$ are {\it contradictory} when both cannot be true and both cannot be false.
\item[]{\bf Contrariety Propositions:} $\alpha$ and $\beta$ are {\it contrary} when both cannot be true, but both can be false.
\item[]{\bf Subcontrariety Propositions:} $\alpha$ and $\beta$ are {\it subcontraries} when both can be true, but both cannot be false.
\item[]{\bf Subaltern Propositions:} $\alpha$ is {\it subaltern} to proposition $\beta$ if the truth of $\beta$ implies the truth of $\alpha$.
\end{enumerate}
It is easy to see that the notion of contradiction, in the case we want to advance in the possibility to understand quantum superpositions in terms of a PSA, requires the specification of the realm that is being presupposed within the discussion. A more accurate definition would be the following.

\begin{enumerate}
\item[]{\bf Potential Contradiction Propositions:} $\alpha$ and $\beta$ are {\it potential contradictory} when both cannot be true and both cannot be false {\it in the actual realm}.
\end{enumerate}

An intuitive example can be provided in order to further understand this definition. As we discussed in \cite{deRonde17e}, consider the immanent power possessed by Messi of `shooting penalties'. The power of shooting a penalty comprises two contradictory actual effectuations, namely, `to score a goal from the penalty' and also `to fail the penalty'. Since both powers preexist in the potential realm, but both cannot be true and both cannot be false in the actual realm, we call them {\it potential contradictions}. This is also an accurate account of what is going on with quantum superpositions of the type discussed in the previous section. Indeed, a quantum superposition of the type $ c_{1} | \uparrow \rangle + \ c_{2} | \downarrow_i \rangle$,  can be related to two MOS which even though existent in the potential realm, cannot be, in the actual realm, both true or both false. Instead, we always observe only one (true) possibility actualized in each measurement. Following this definition, the logos approach goes in line with the idea that quantum superpositions are composed of contradictory powers; i.e. powers which, about one and the same phenomena, simultaneously refer to contradictory results in the potential realm but produce a single true result in the actual realm. In fact, the logic arising from our logos approach is paraconsistent, 
see \cite{Eva16} and \cite[\S 3.3]{Angot15}.

\subsection{Objective Probability (and Epistemic Measurements)}

Gleason's theorem can be understood, following Karl Svozil \cite{Svozil17}, as precluding the possibility of two valued measures in quantum logic.\footnote{This is completely analogous to the interpretation of {\it quantum possibility} discussed in \cite{RFD14, DFR06}.} In our own terms, this can be also understood as the impossibility of relating the quantum probability measure to an actual state of affairs.\footnote{Let us remark that when considering the problem of representing quantum physical reality the application of the Bayesian interpretation of {\it subjective probability} misses completely the point. An ontological question cannot be addressed from an epistemological perspective. QBism does use the Bayesian subjectivist interpretation of probability, but at the cost of denying the {\it reference} of QM to physical reality itself \cite{FuchsPeres00, QBism13}.} In turn, this impossibility to apply a classical ignorance interpretation is obviously applicable to quantum superpositions themselves. It is well known that one cannot apply an ignorance interpretation to the terms of a superposition in case one is also willing to respect the orthodox Hilbert formalism. But it is interesting that this fact, that $QS$ cannot be understood as the state of a system composed by definite valued properties, can be also derived naturally as a consequence of KS theorem itself.\footnote{This result is just an expression of the conclusion derived by Schr\"odinger in \cite[p. 153]{Schr35} regarding the notion of {\it state} in QM: ``The classical notion of {\it state} becomes lost [in QM], in that at most {\it half} of a complete set of variables can be assigned definite numerical values''.}   
\begin{coro}[KS for Superpositions]
Let $\Psi:\mathcal{G}\to[0,1]$ be a PSA, $\mathcal{C}=\{P_{\alpha_i}\}$ a context
and $QS_{\Psi,\mathcal{C}}=\sum c_i|\alpha_i\rangle$ a superposition.
Then, a binary valuation for the elements $\{|\alpha_i\rangle\}$ pertaining to the superposition $QS_{\Psi,\mathcal{C}}$ that preserves the compatibility conditions of $\Psi$ is precluded.
\end{coro}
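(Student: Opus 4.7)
The plan is to reduce the statement to the Kochen--Specker theorem by way of the preceding theorem, which shows that $QS_{\Psi,\mathcal{C}}$ uniquely determines $\Psi$ on the entire graph $\mathcal{G}$. The essential point is that the ``compatibility conditions of $\Psi$'' force any local binary assignment on $\{|\alpha_i\rangle\}$ to propagate coherently to a global two-valued measure on the whole graph, which KS forbids in dimension $\geq 3$.

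First, I would make the hypothesis precise. A binary valuation $v:\{|\alpha_i\rangle\}\to\{0,1\}$ is \emph{compatible} with $\Psi$ when it assigns value $1$ to exactly one element of the context $\mathcal{C}$ (mirroring the fact that at most one outcome is actualised per measurement) and agrees with $\Psi$ on the potentia that are already extremal, i.e. $v(|\alpha_i\rangle)=|c_i|^2$ whenever $|c_i|^2\in\{0,1\}$. More generally, the valuation is required to be the restriction of a $\{0,1\}$-valued measure on the lattice of projectors that reproduces the values $\Psi(P)$ for every $P\in\mathcal{G}$.

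Second, I would invoke the preceding Theorem and the generation property of $\mathcal{G}$ described in the earlier Corollary. Since each context generates $\mathcal{G}$ by successive changes of basis, and since $QS_{\Psi,\mathcal{C}}$ codifies the values of $\Psi$ over the whole graph, any $v$ preserving the compatibility conditions of $\Psi$ extends uniquely to a global map $\widetilde{v}:\mathcal{G}\to\{0,1\}$. Within every maximal context $\mathcal{C}'\subseteq\mathcal{G}$ the map $\widetilde{v}$ must again assign value $1$ to exactly one node and $0$ to the others, since anything else would contradict the Born-rule values that $\Psi$ dictates on $\mathcal{C}'$.

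Third, I would close by appealing to Kochen--Specker: for $\dim(\mathcal{H})\geq 3$ there exists no such global $\{0,1\}$-valuation on the lattice of projectors satisfying the ``exactly one per context'' condition. Hence the hypothetical $\widetilde{v}$ cannot exist, and consequently neither can $v$. The main obstacle is precisely the interpretive step: the statement is empty if read as a purely local assignment on the finite set $\{|\alpha_i\rangle\}$, because trivial binary labellings of a single context always exist. The work of the proof is therefore to articulate why ``compatibility with $\Psi$'' is a global condition, transmitted through the graph by the determination result of the preceding Theorem, so that the obstruction provided by KS becomes the genuine content of the corollary.
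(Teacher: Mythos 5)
Your proposal is correct and takes essentially the same route as the paper, which simply cites the Kochen--Specker theorem as yielding the contradiction. The only difference is that you spell out the local-to-global step --- making the ``compatibility conditions'' precise and using the determination of $\Psi$ by $QS_{\Psi,\mathcal{C}}$ together with the generation property of contexts to extend the valuation to all of $\mathcal{G}$ before invoking KS --- which the paper's one-line proof leaves implicit.
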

\begin{proof}
Assigning true/false values to each $|\alpha_i\rangle$ leads to a contradiction as it is proven in KS Theorem. This shows that one cannot interpret in terms of ignorance the terms pertaining to a quantum superposition  $QS_{\Psi,\mathcal{C}}$ without confronting the whole structure provided by $\Psi$.
\qed\\
\end{proof}

It is not possible to provide an ignorance interpretation of the terms pertaining to a quantum superposition. Furthermore, each term can be related to a MOS; i.e., every term is predicted by the theory. Since every term bears some possibility of being observed. And since the theory is empirically adequate; it has been tested in many, many occasions and we have always found that experimental observations agree systematically with the predictions of the theory. That can only mean ---from a realist perspective, at least--- that all terms, according to the theory, bear some kind of existence. This intuitive idea ---also present in the many worlds interpretation of QM--- according to which, if a theoretical formalism talks about elements that can be predicted operationally, then there must exist some {\it reference} of these elements to reality, is at the basis of physical realism itself.\footnote{The discussion about what exactly is this relation exceeds the scope of the present paper. We leave this discussion for a future work.} 

As we discussed extensively in \cite{deRonde16a, deRondeMassri17a}, in case we are willing to give up our actualist metaphysical prejudices there is a possible solution to the riddle. If we replace the classical account by one in terms of immanent powers with definite potentia, we can indeed restore not only an objective representation of what QM is talking about, but also the epistemic character of quantum measurements themselves. This requires to take the formalism very seriously, which means for us, not to believe in a multiplicity of unobservable worlds nor to shift the analysis to our own consciousness, but rather to consider the invariant conditions implied by the mathematical formalism itself. Instead of going from a presupposed (classical) metaphysical scheme into the formalism we go from the mathematical formalism into (non-classical) metaphysics. And if we look closely to the formalism, it is of course the Born rule which provides an {\it invariant definition} of `value' that is completely independent of the choice of any context \cite{deRonde16a, deRondeMassri16}.\\ 

\noindent {\it
{\bf Born Rule:} Given a vector $\Psi$ in a Hilbert space, the following rule allows us to predict the average value of (any) observable $P$. 
$$\langle \Psi| P | \Psi \rangle = \langle P \rangle$$
As the calculation makes explicit by itself, this definition is independent of the choice of any particular basis.}\\

\noindent Taken seriously, this line of thought naturally leads us to the consideration of a {\it generalized element of physical reality} \cite{deRonde16a}.\\

\noindent {\it {\bf Generalized Element of Physical Reality:} If we can predict in any way (i.e., both probabilistically or with certainty) the value of a physical quantity, then there exists an element of reality corresponding to that quantity.}\\

Probability has in this case also an objective character since it provides accurate knowledge of intensive existents. Indeed, within our logos approach, empirical subjects play only a passive role. We are now ready to argue ---against the Bohrian orthodoxy--- that in QM, as in classical physics, empirical subjects are only spectators and never actors when observing the dramatic evolution of the quantum (potential) state of affairs. According to the logos account of QM, empirical subjects ---like Einstein wanted--- are completely ``detached'' from the representation of quantum physical reality. So just like the moon has a definite position regardless of we observing it or not, an immanent power has a definite potentia whether we choose to measure it or not. 

Now, since some quantum superpositions also discuss {\it potentially contradictory} MOS, it then follows that $QS_{\Psi,\mathcal{C}}$ can be composed of contradictory existents ---such as the superposition discussed in the previous Stern-Gerlach example. 

\begin{theo}[Superpositions are not `States of a System']
A quantum superposition cannot be understood as the state of a system; i.e. a particular set of definite valued non-contradictory properties.
\end{theo}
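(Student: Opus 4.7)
The plan is to argue by contradiction, assuming that some non-trivial quantum superposition $QS_{\Psi,\mathcal{C}} = \sum_i c_i |\alpha_i\rangle$ (with at least two non-zero coefficients) can be understood as the state of a system in the classical sense. First I would unpack what that assumption requires. A \emph{system}, in the Newtonian--Aristotelian reading presupposed by the theorem, is constituted by a collection of properties that are simultaneously (i) \emph{definite-valued}, i.e.\ each admits a binary true/false assignment, and (ii) mutually \emph{non-contradictory}, so that the collection of ``true'' properties yields a consistent description of one and the same individual. The strategy is then to show that requirements (i) and (ii) both fail for the terms $\{|\alpha_i\rangle\}$ entering $QS_{\Psi,\mathcal{C}}$, so that no system-theoretic reading is available.

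For the failure of (i), I would invoke the preceding Corollary (KS for Superpositions) directly. By the Corollary in Section~3, the context $\mathcal{C}$ generates the whole graph $\mathcal{G}$, so any binary valuation of $\{|\alpha_i\rangle\}$ must extend coherently over all contexts compatible with $\Psi$. The KS argument, transported to superpositions by the Corollary just proved, shows that no such extension preserving the compatibility conditions of the PSA $\Psi$ exists. Hence one cannot coherently assign definite true/false values to the terms of $QS_{\Psi,\mathcal{C}}$ without destroying the global structure of $\Psi$; the ``definite-valued'' clause of the system interpretation is therefore unattainable.

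For the failure of (ii), I would appeal to the analysis of Subsection~3.2 on potential paraconsistency. The Stern--Gerlach superposition $c_1|\uparrow\rangle + c_2|\downarrow\rangle$ shows explicitly that the MOS associated with the terms of a non-trivial superposition are \emph{potentially contradictory} in the sense defined above: both cannot be true and both cannot be false in the actual realm. Since the classical notion of system forbids one and the same entity from bearing contradictory properties, the ``non-contradictory'' clause is likewise violated. Combining both failures gives the desired contradiction, and hence the theorem.

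The main obstacle I anticipate is conceptual rather than computational: one must state the theorem so that trivial one-term ``superpositions'' (effectively eigenstates of the chosen basis) are excluded, since for these the KS-type obstruction does not bite and a system-style reading of that single term is not immediately blocked. The cleanest way to handle this is to restrict attention to genuine superpositions $QS_{\Psi,\mathcal{C}}$ with at least two non-zero coefficients, so that the preceding Corollary applies with full force. The remaining delicate point is to keep the argument squarely on the ontological (rather than merely epistemic) level, making explicit that the obstruction to Global Binary Valuations and the presence of potential contradictions are facts about the PSA $\Psi$ itself, not artefacts of any particular measurement choice.
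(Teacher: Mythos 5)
Your proposal is correct and follows essentially the same route as the paper: the paper's own (one-sentence) proof likewise combines the preceding Corollary (KS for Superpositions), which blocks any definite binary valuation / ignorance reading of the terms, with the potential-contradiction analysis of Subsection 3.2, to conclude that a superposition cannot be a set of definite valued non-contradictory properties. Your version merely makes both prongs explicit and adds the sensible caveat restricting to superpositions with at least two non-zero coefficients, which the paper leaves implicit.
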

\begin{proof}
From the previous result, it follows that a superposition possesses contradictory elements which preexist to measurements (i.e., they cannot be interpreted in terms of ignorance).
\qed\\
\end{proof}

\noindent Since the components of a quantum superposition can be contradictory ---in the sense discussed above---, the interpretation in terms of a (non-contradictory) `system' (or object) is precluded. The common escape route of researchers of QM using these classical notions is to recall that this is ``quantum'' and thus ``weird''. But this of course does not provide any justification ---beyond the dogmatic belief that QM talks about ``small particles''--- of why the notions of `system' and `state' have been used within the semantics applied to the mathematical formalism.

\section*{Conclusions}

In this paper we have provided a characterization of quantum superpositions within our logos categorical approach to QM. Our representation in both conceptual terms, through the notions of immanent power and potentia, and in formal terms, through the introduction of Graphs, allowed us to provide an {\it anschaulich} content to the theory which gives an intuitive explanation of many of the main features already present in the orthodox Hilbert formalism. In particular, we discussed an epistemic account of contextuality which --escaping KS theorem--- presents an objective representation of the state of affairs the theory talks about. We also provided an intuitive understanding of the paraconsistent character within typical Stern-Gerlach type experiments. Our approach also allows to provide a clear coherent distinction between different mathematical levels ---overseen in the literature due to the semantical interpretation in terms of `systems', `states' and `properties'. Finally, we have provided an interpretation of quantum entanglement which allows both a graphical and conceptual understanding that goes beyond the operational reference to measurement outcomes.

\section*{Acknowledgements} 

This work was partially supported by the following grants: FWO project G.0405.08 and FWO-research community W0.030.06. CONICET RES. 4541-12 and the Project PIO-CONICET-UNAJ (15520150100008CO) ``Quantum Superpositions in Quantum Information Processing''.


\end{document}